\definecolor{gr}{rgb}   {0.,   0.69,   0.23 }
\definecolor{bl}{rgb}   {0.,   0.5,   1. }
\definecolor{mg}{rgb}   {0.85,  0.,    0.85}
\definecolor{or}{rgb}   {0.9,  0.5,   0.}
\definecolor{webred}{rgb}{0.75,0,0}
\definecolor{webgreen}{rgb}{0,0.75,0}
\newtheorem{theorem}{Theorem}
\newtheorem{proposition}[theorem]{Proposition}
\newtheorem{lemma}[theorem]{Lemma}
\theoremstyle{definition}
\theoremstyle{remark}
\newtheorem{remark}[theorem]{Remark}
\newcommand{\ba}{\begin{array}}
\newcommand{\ea}{\end{array}}
\newcommand{\Bk}{\color{black}}
\newcommand{\N}{\mathbb{N}}
\newcommand{\R}{\mathbb{R}}
\newcommand{\C}{\mathbb{C}}
\newcommand{\cD}{\mathcal{D}}
\newcommand{\bel}{\begin{equation} \label}
\newcommand{\ee}{\end{equation}}
\begin{document}\setul{2.5ex}{.25ex}
\title[]{Resonances  near thresholds  in  slightly \Bk Twisted Waveguides}
\author[V.\ Bruneau]{Vincent Bruneau}
\address{Universit\'e de Bordeaux, IMB, UMR 5251, 33405 TALENCE cedex, France}\email{Vincent.Bruneau@u-bordeaux.fr}
\author[P.\ Miranda]{Pablo Miranda}\address{Departamento de Matem\'atica y Ciencia de la Computaci\'on, Universidad de Santiago de Chile, Las Sophoras 173. Santiago, Chile.}\email{pablo.miranda.r@usach.cl}
\author[N.\ Popoff]{Nicolas Popoff}
\address{Universit\'e de Bordeaux, IMB, UMR 5251, 33405 TALENCE cedex, France}
\email{Nicolas.Popoff@u-bordeaux.fr}
\email{}

\maketitle

\begin{abstract}
We consider the Dirichlet Laplacian in a straight three dimensional waveguide with non-rotationally invariant cross section, perturbed by a twisting of small amplitude. It is well known that such a perturbation does not create eigenvalues below the essential spectrum. However, around the bottom of the spectrum, we provide a meromorphic extension of the weighted  resolvent  of the perturbed operator,    and  show the existence of exactly one resonance near this point.  Moreover, we  obtain the  asymptotic behavior   of this resonance as the size of the twisting goes to 0.  We also extend the  analysis to the upper    eigenvalues of the transversal problem,  showing that the number of resonances is bounded by the multiplicity of the eigenvalue  and obtaining the corresponding asymptotic behavior. 
\end{abstract}
\noindent {\bf  AMS 2000 Mathematics Subject Classification:} 35J10, 81Q10,
35P20.\\
\noindent {\bf  Keywords:}
Twisted waveguide, Dirichlet Laplacian, Resonances near thresholds. \\

\section{Introduction}
Let $\omega$ be a bounded domain in $\R^2$  with Lipschitz boundary. 
Set $\Omega:=\omega\times\R$ and  $(x_1,x_2,x_3)=:(x_t,x_3)$.
Define $H_{0}$ as the Laplacian in $\Omega$ with Dirichlet boundary conditions. Consider    $-\Delta_\omega$ (the Laplacian in $\omega$ with Dirichlet boundary conditions). Since $\omega$ is bounded, the spectrum of the operator $-\Delta_\omega$
is a  discrete sequence of  values converging to infinity, denoted by $\{\lambda_n\}_{n=1}^\infty$. Then, the spectrum of $H_0$ is given by 
$$\sigma(H_0)=\bigcup_{n=1}^\infty[\lambda_n,\infty)=[\lambda_1,\infty),$$
and is purely absolutely continuous.

Geometric  deformations of such a straight waveguide have been widely studied in recent years, and have numerous applications in quantum transport in nanotubes. The spectrum of the Dirichlet Laplacian in 
waveguides provides   information about the quantum transport of spinless particles with {\it hardwall} boundary conditions. In particular, the existence of eigenvalues describes the occurrence of {\it bound states} corresponding to trapped trajectories created by the geometric deformations.  For a review 
we refer to \cite{Kre07}, where bending against twisting is discussed, and to \cite{Gru04} for  a general differential
approach.
Without being exhaustive we recall some well known situations: a local bending of the waveguide creates eigenvalues below the essential spectrum, as  also do a local enlarging of its width (\cite{DuEx95, Gru04}). On the contrary, it has been proved, under general assumptions, that a twisting of the waveguide does not lower the spectrum (\cite{EkKovKre08}), in particular a twisting going to 0 at infinity will not modify the spectrum (\cite{Gru04}). 
 In such a situation it is natural to introduce the notion of resonance and to analyze  the effect of the twisting on the resonances near the real axis. There already exist studies of resonances in waveguides: resonances in a thin curved waveguide (\cite{DuExMel97,Ned97}), or more recently in a straight waveguide with an electric potential, perturbed by a twisting (\cite{KovSacc07}). In these  both cases, however,  the resonances appear  as  perturbations of embedded eigenvalues of a reference operator, and follow the {\it Fermi golden rule} (see  \cite{Ha07} for references and for an overview on such resonances).  As we will see, in our case the origin of the resonances will be rather due to the presence of thresholds appearing as branch points created by a 1d Laplacian. Our analysis will be close to the studies near 0 of the 1d Laplacian (see for instance \cite{Si76,BulGesReSi97} where, even if resonances are not discussed,  the ``threshold" behavior appears). A similar phenomena of threshold resonances was already studied for a magnetic Hamiltonian  in   \cite{BonBruRai07},  where the   thresholds are eigenvalues of infinite multiplicity of some transversal problem.


In this article we will consider a small twisting of the waveguide: Let $\varepsilon: \R\to \R$ be a non-zero  function  of class $C^1$ \Bk with  exponential decay i.e.,  for some $\alpha >2(\lambda_2-\lambda_1)^{1/2}$ (this hypothesis can be relaxed, see Remark \ref{2oct17a}), $\varepsilon$ satisfies 
\bel{hypeps}
\varepsilon(x_3)=O(e^{-\alpha\langle x_3\rangle}), \quad   \varepsilon'(x_3)=O(e^{-\alpha\langle x_3\rangle}),
\ee
where $\langle x_3\rangle:=(1+x_3^2)^{1/2}$.
Then, for  $\delta>0$ we define $\Omega_{\delta}$ as the waveguide obtained by twisting $\Omega$ with  $ \theta_\delta$,   where $\theta_\delta'(x_3)=\delta\varepsilon(x_3)$, \Bk   \Bk i.e., we define 
 $$\Omega_{\delta}:=\{ (r_{\theta_\delta (x_{3})}(x_{t}),x_{3}), (x_{t},x_{3})\in \Omega)\},$$
 where $r_{\theta}$ is the rotation of angle $\theta$ in $\R^{2}$. Set $$W(\delta):=-\delta\partial_\varphi\varepsilon\partial_3-\delta\partial_3\varepsilon\partial_\varphi-\delta^2\varepsilon^2\partial_\varphi^2=-2\delta\varepsilon\partial_\varphi\partial_3-\delta\varepsilon'\partial_\varphi-\delta^2\varepsilon^2\partial^2_\varphi,$$
with the notation $\partial_\varphi$ for $x_1\partial_2-x_2\partial_1$. Then, it is standard (see for instance \Bk \cite[Section 2]{Gru04}) 
that the Dirichlet Laplacian in $\Omega_{\delta}$ is unitarily equivalent to the operator 
$$H(\delta):=H_{0}+W(\delta),$$
defined in $\Omega$ with  a  Dirichlet boundary condition.  Since the perturbation is a  second order \Bk differential operator, $H(\delta)$ is not a relatively compact perturbation of $H_{0}$. However the resolvent difference $H(\delta)^{-1}-H_{0}^{-1}$ is compact (\cite[Section 4.1]{BriKoRaSo09}), and therefore   $H(\delta)$ and $H_{0}$  \Bk have the same essential spectrum. 
Moreover, 
the spectrum of $H(\delta)$ coincide with  $[\lambda_{1},+\infty)$, see  \cite{EkKovKre08}. 

In this article we will show that  around 
 $\lambda_{1}$ there exists,  for $\delta$ small enough, a meromorphic extension of the  weighted  resolvent  of $H(\delta )$ with respect to the variable  $k:= \sqrt{z-\lambda_1}$,  \Bk where $z$ is the spectral parameter.  In other words,  the resolvent   $(H(\delta)- z)^{-1}$, first defined  for $z$ in    $\C\setminus [0, + \infty)$, admits a meromorphic extension  on  a weighted space (space of functions with exponential decay along the tube),  for values in   a neighborhood of $\lambda_1$ in   a 2-sheeted Riemann surface.  We will identify  the resonances \Bk
 around $\lambda_1$ with the poles of this meromorphic extension in the  parameter $k$. We will prove  in Theorem \ref{T:main} that in a neighborhood independent of $\delta$,  there is exactly one pole $k(\delta)$, whose behavior as $\delta\to0$ is explicit: 
 \bel{E:MT}
 k(\delta)=-i\mu\delta^2+O(\delta^3),
 \ee
 where $\mu>0$ is given by \eqref{E:Defmu}  below,  and moreover, $k(\delta)$ is on the imaginary axis.
  
The fact that $k(\delta)$ is on the negative imaginary axis means that in the spectral variable the resonance is on the second sheet of  the  2-sheeted Riemann surface, far from the real axis (it is sometimes called an antibound state \cite{Sim2000}). In particular such  a  resonance can not be detected using dilations (a dilation of angle larger than  $\pi$ would be needed) \Bk and is completely different in nature from those created by perturbations of embedded eigenvalues. For this reason  we define resonances as the poles of weighted resolvents, assuming  that $\varepsilon$ is exponentially decaying. 
However, \Bk  a difficulty  comes from the non relatively compactness of the perturbation $W(\delta).$ 
This problem   will be  
overcome exploiting the smallness of the perturbation   and the  locality  of our problem.

Our analysis provides   an analogous result for  higher thresholds,   in Section \ref{R:higher}:  Around  each $\lambda_{q_{0}}$ there are at most $m_{0}$ resonances (for all   $\delta$  small enough), where $m_{0}$ is the multiplicity of $\lambda_{q_{0}}$ as 
eigenvalue of $-\Delta_{\omega}$. Moreover,  
under an additional assumption, each one of these resonances have an asymptotic behavior  of the form \eqref{E:MT}, \Bk 
 where the constant $\mu$ 
is an   eigenvalue of a $m_0\times m_0$ explicit matrix (not necessarily Hermitian). 
 Although Theorem \ref{T:higher}  may be viewed as a generalization of Theorem \ref{T:main}, we preferred to push forward the proof for the 
first threshold  for the following reasons: it is easier to follow  and contain all the main ingredients needed for the proof in the upper  thresholds,   the  eigenvalues of $-\Delta_{\omega}$ are generically simple  as we know  the first eigenvalue is. 

  \begin{remark}
 Independent  of the size of the  perturbation $W(\delta)$, a more global definition of resonances would be possible by showing that a generalized determinant (as in \cite{BouBru08} or in \cite[Definition 4.3]{Sjo14}) is well defined on $\C\setminus [0, + \infty)$ and admits an analytic extension. Then the resonances would be defined as the zeros of this determinant  on a infinite-sheeted Riemann surface (as in \cite[Definitions 1-2]{BonBruRai07}). 
 \end{remark}

\section{Preliminary decomposition of the  free  resolvent}\label{S1}

 Let us describe the singularities of the free resolvent. Setting  $D_3:=-i\partial_3$, we have that  
\bel{24aug17aa}
H_{0}-\lambda_{1}=(-\Delta_{\omega}-\lambda_{1})\otimes I_{x_3}+I_{x_t}\otimes D_{3}^2.
\ee
For $k\in \C^{+}:= \{k \in \C; \;  {\rm Im \,} k >0 \}$, \Bk define
$$R_{0}(k):=(H_{0}-\lambda_{1}-k^2)^{-1},$$
and $R$ similarly for $H(\delta)$.  If   for $n\in\N$,  $\pi_n$
is  the orthogonal projection onto ker$(-\Delta_\omega-\lambda_n)$, using 
\eqref{24aug17aa}   for  $ k^2 \in \C\setminus [0, + \infty)$, \Bk we have
 that \Bk
\bel{24aug17a}R_0(k)=(H_0-\lambda_1-k^2)^{-1}=\sum_{q=1} \pi_q\otimes(D^2_{3}+(\lambda_q-\lambda_1)-k^2)^{-1}.\ee
The integral kernel of $(D_3^2-k^2)^{-1}$ is explicitly given 
by 
\bel{3sep17a}
\frac{i}{2k}e^{i\,k|x_3-x_3'|}.
\ee

Let $\eta$ be an exponential weight of the form  $\eta(x_{3})=e^{-N \left\langle x_{3}\right \rangle}$, for $(\lambda_2-\lambda_1)^{1/2}<N<\alpha/2$.
Also, for $a\in\C$ and $r>0$  set  $B(a,r):=\{z\in\C;|a-z|<r\}$.
Then, as in  \cite[Lemma 1]{BonBruRai07} it can be seen that  the operator valued-function $k\mapsto (R_{0}(k):\eta^{-1}L^2(\Omega)\to\eta L^2(\Omega)) $,  initially defined on $\C^+$, has a meromorphic extension \Bk in $B(0,r)$ for any $0<r<(\lambda_2-\lambda_1)^{1/2}$, with a unique pole,  of multiplicity  one, at $k=0$. \Bk  More precisely, 
\bel{6sep17a}
\eta R_{0}(k)\eta = \frac{1}{k} \pi_1 \otimes \alpha_0 + A_0(k),
\ee
where $\alpha_0$ is the rank one operator $\alpha_0 = \frac{i}{2} |\eta \rangle\langle \eta|$ and 
  $k\mapsto(A_0(k)$: $L^2(\Omega)\to L^2(\Omega))$ is the analytic operator-valued function 
 \bel{7sep17a}
 A_0(k) := \pi_1 \otimes r_1(k) + \sum_{q=2} \pi_q\otimes \eta (D^2_{3}+(\lambda_q-\lambda_1)-k^2)^{-1}\eta,
 \ee
  with $r_1$ being the operator in $L^2(\R)$ with integral kernel given 
by 
$$i\eta(x_3) \frac{ (e^{i\,k|x_3-x_3'|}-1)}{2k} \eta(x_3').$$
Clearly, for $0<r<(\lambda_2-\lambda_1)^{1/2}$, the family of operators $A_0(k)$ is uniformly bounded on $B(0,r)$.

\begin{remark}\label{2oct17a}
Note that the condition $\alpha >2(\lambda_2-\lambda_1)^{\frac12}$ on the function $\varepsilon$, enters here in order to have analytic properties in the ball $B(0,r)$,  $0<r<(\lambda_2-\lambda_1)^{1/2}$. This assumption can be relaxed to $\alpha >0$, but  the results will be restricted to $B(0,r)$ with  $0<r< \frac{\alpha}{2}$.
\end{remark}

In order to define and study the resonances, we will consider a suitable meromorphic extension of $R(k)$, using the identity 
\bel{4sept17e}
\eta R(k)\eta=\eta R_{0}(k)\eta\left({\rm Id}+\eta^{-1} W(\delta)R_{0}(k)\eta\right)^{-1}.\ee
Since $H(\delta)$ has no eigenvalue below $\lambda_1$ (see \cite{EkKovKre08}), the above relation is initially well defined and analytic for $k \in \C^+$. \Bk
 It is  necessary then to understand under which conditions this formula can be used to define  such an  extension. 
 Since  we can not apply directly the meromorphic   Fredholm theory  ($W(\delta)$ is not $H_{0}$-compact), we will need to show explicitly  that $\left({\rm Id}+\eta^{-1} W(\delta)R_{0}(k)\eta\right)^{-1}$ is meromorphic in some region around zero. 
 
 Let $\psi_1$ be such that $-\Delta_\omega \psi_1=\lambda_1 \psi_1$, $\|\psi_1\|_{L^2(\omega)}=1$ (then     $\pi_1=| \psi_1 \rangle \langle \psi_1 |$),   and define 
\bel{E:defPhi}\Phi_\delta:= -\frac{i}{2}((\partial_\varphi  \psi_1 \otimes \eta^{-1}  \varepsilon' )
+
\delta(\partial^2_\varphi \psi_1 \otimes  \eta^{-1} \varepsilon^2 )).
\ee

\begin{lemma}\label{Lholo0}
Let $0<r<(\lambda_2-\lambda_1)^{1/2}$.    There exists $\delta_0>0$ such that 
for any $0<\delta\leq\delta_0$ and $k\in B(0,r)\setminus\{0\}$  
$$\eta^{-1} W(\delta)R_{0}(k)\eta=\frac{\delta}{k} K_0 + \delta T(\delta,k),$$
where  $K_0$ is the rank one operator
\bel{4sep17} K_0:= |\Phi_\delta\rangle\langle \psi_1 \otimes \eta|,
\ee
and 
$B(0,r)\ni \Bk k\mapsto(T(\delta,k)$: $L^2(\Omega)\to L^2(\Omega))$ is an analytic operator-valued function. Moreover,  
\bel{2sep17}\sup_{0<\delta\leq\delta_0, \, k \in B(0,r)}||T(\delta,k)||<\infty.\ee

\end{lemma}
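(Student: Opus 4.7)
The plan is to substitute the decomposition \eqref{6sep17a} of $\eta R_0(k)\eta$ into the identity $R_0(k)\eta=\eta^{-1}[\eta R_0(k)\eta]$ and multiply on the left by $\eta^{-1}W(\delta)$, producing
$$
\eta^{-1}W(\delta)R_0(k)\eta \;=\; \frac{1}{k}\,\eta^{-1}W(\delta)\eta^{-1}(\pi_1\otimes\alpha_0) \;+\; \eta^{-1}W(\delta)\eta^{-1}A_0(k).
$$
I then identify the first term as $\tfrac{\delta}{k}K_0$ and the second as $\delta T(\delta,k)$. Since $\pi_1\otimes\alpha_0=\tfrac{i}{2}|\psi_1\otimes\eta\rangle\langle\psi_1\otimes\eta|$, the leftmost $\eta^{-1}$ cancels the $\eta$ in the ket, and the singular term reduces to the computation of $\eta^{-1}W(\delta)$ on $\psi_1\otimes 1$. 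Using the explicit form $W(\delta)=-2\delta\varepsilon\partial_\varphi\partial_3-\delta\varepsilon'\partial_\varphi-\delta^2\varepsilon^2\partial_\varphi^2$ together with $\partial_3(1)=0$, I obtain
$$
\eta^{-1}W(\delta)(\psi_1\otimes 1)\;=\;-\delta(\partial_\varphi\psi_1)\otimes\eta^{-1}\varepsilon'-\delta^2(\partial_\varphi^2\psi_1)\otimes\eta^{-1}\varepsilon^2\;=\;-2i\delta\,\Phi_\delta
$$
by \eqref{E:defPhi}, and combining with the $\tfrac{i}{2}$ prefactor yields precisely $\delta|\Phi_\delta\rangle\langle\psi_1\otimes\eta|=\delta K_0$.

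For the regular part, the crucial observation extracted from \eqref{7sep17a} is the factorisation $A_0(k)=\eta\,\widetilde{A}_0(k)\,\eta$, with
$$
\widetilde{A}_0(k):=\pi_1\otimes\widetilde{r}_1(k)+\sum_{q\geq 2}\pi_q\otimes\bigl(D_3^2+(\lambda_q-\lambda_1)-k^2\bigr)^{-1},
$$
where $\widetilde{r}_1(k)$ has kernel $i(e^{ik|x_3-x_3'|}-1)/(2k)$. Thus $\eta^{-1}W(\delta)\eta^{-1}A_0(k)=\eta^{-1}W(\delta)\widetilde{A}_0(k)\eta$. Splitting $W(\delta)=\delta W_1+\delta^2 W_2$ with $W_1:=-2\varepsilon\partial_\varphi\partial_3-\varepsilon'\partial_\varphi$ and $W_2:=-\varepsilon^2\partial_\varphi^2$ factors out a common $\delta$, giving $T(\delta,k)=(\eta^{-1}W_1+\delta\,\eta^{-1}W_2)\widetilde{A}_0(k)\eta$. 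Analyticity of $k\mapsto T(\delta,k)$ on $B(0,r)$ is inherited from analyticity of each summand in $\widetilde{A}_0(k)$.

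The remaining task is to show $T(\delta,k)$ is uniformly bounded on $L^2(\Omega)$ for $(\delta,k)\in(0,\delta_0]\times B(0,r)$. For $q\geq 2$ this is routine: the resolvents $(D_3^2+(\lambda_q-\lambda_1)-k^2)^{-1}$ are uniformly $L^2\to H^2$ bounded in $x_3$ (the spectral gap $\lambda_2-\lambda_1-r^2>0$ guarantees this for all $q\geq 2$ and $k\in B(0,r)$), and the multiplication coefficients $\eta^{-1}\varepsilon^{(j)}$, $\eta^{-1}\varepsilon^2$ are bounded since $N<\alpha$ makes the exponent $(N-\alpha)\langle x_3\rangle$ negative. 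The main obstacle is the $q=1$ contribution, because $\widetilde{r}_1(k)$ is \emph{not} $L^2(\R)$-bounded: for $h\in L^2(\R)$ one only has the pointwise estimate $|\widetilde{r}_1(k)(\eta h)(x_3)|\leq C(1+|x_3|)e^{|\operatorname{Im}k|\,|x_3|}\|h\|_{L^2}$. The hypothesis $\alpha>2(\lambda_2-\lambda_1)^{1/2}$ yields the chain $|\operatorname{Im}k|<r<(\lambda_2-\lambda_1)^{1/2}<N<\alpha/2$, so that after multiplication by $\eta^{-1}\varepsilon^{(j)}$ (decaying at rate $\alpha-N$) the resulting function is still exponentially decaying in $x_3$, and a Cauchy--Schwarz estimate on the remaining kernel delivers the uniform $L^2$-bound \eqref{2sep17}, completing the proof.
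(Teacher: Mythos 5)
Your proposal is correct and follows essentially the same route as the paper: the same splitting of $\eta^{-1}W(\delta)R_0(k)\eta$ via \eqref{6sep17a}, the same identification of the rank-one singular part (the $\partial_3$ kills the constant range of $\eta^{-1}\alpha_0$, leaving $\delta K_0$), and the same strategy of kernel/decay estimates for the regular part. The only step worth making explicit is that the term $-2\delta\varepsilon\partial_\varphi\partial_3$ acting on $\pi_1\otimes\widetilde{r}_1(k)\eta$ requires a bound on $\partial_3\widetilde{r}_1(k)$ rather than on $\widetilde{r}_1(k)$ itself; its kernel is $-\tfrac12\,\mathrm{sign}(x_3-x_3')e^{ik|x_3-x_3'|}$ (the $1/k$ and the linear growth disappear upon differentiation, which is exactly the computation the paper records), after which your multiplication-by-$\eta^{-1}\varepsilon$ and Cauchy--Schwarz argument applies verbatim.
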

\begin{proof}
Thanks to \eqref{6sep17a}, 
\bel{E:devWR}
\eta^{-1} W(\delta)R_{0}(k)\eta= \frac{1}{k} \eta^{-1} W(\delta) \eta^{-1} ( \pi_1 \otimes \alpha_0) + \eta^{-1} W(\delta)\eta^{-1}  A_0(k). 
\ee
Since the range of the operator $\eta^{-1}  \alpha_0=\frac{i}{2}|1\rangle \langle \eta$ is spanned by constant functions, we have $\partial_3 \eta^{-1}  \alpha_0=0$, and therefore 
$$\eta^{-1}  W(\delta) \eta^{-1}( \pi_1 \otimes \alpha_0)=\frac{i}{2} |  \eta^{-1}  (-\delta\varepsilon'\partial_\varphi-\delta^2\varepsilon^2\partial^2_\varphi) \eta^{-1} (\psi_1 \otimes \eta)\rangle\langle \psi_1 \otimes \eta| = \delta |\Phi_\delta\rangle\langle \psi_1 \otimes \eta|=\delta K_{0}.$$
We now treat the last term of \eqref{E:devWR}: Setting  $ \delta T(\delta,k)= \eta^{-1} W(\delta)\eta^{-1}  A_0(k)$ we immediately get
\begin{align*}T(\delta,k)
=&-2
\Big(\partial_\varphi \pi_1 \otimes  \eta^{-1} \varepsilon \partial_3  \eta^{-1} r_1(k) + \sum_{ q\geq 2  } \partial_\varphi \pi_q\otimes  \eta^{-1} \varepsilon \partial_3 (D^2_{3}+(\lambda_q-\lambda_1)-k^2)^{-1}\eta\Big)\\
&- \Big(\partial_\varphi \pi_1 \otimes  \eta^{-1}  \varepsilon'   \eta^{-1} r_1(k) + \sum_{q\geq 2} \partial_\varphi \pi_q\otimes  \eta^{-1}  \varepsilon' (D^2_{3}+(\lambda_q-\lambda_1)-k^2)^{-1}\eta \Big)\\&
- \delta \Big( \partial_\varphi^2 \pi_1 \otimes  \eta^{-1}  \varepsilon^2  \eta^{-1}  r_1(k) + \sum_{q\geq 2}  \partial_\varphi^2 \pi_q\otimes   \eta^{-1}  \varepsilon^2 (D^2_{3}+(\lambda_q-\lambda_1)-k^2)^{-1}\eta\Big).\end{align*}
It is clear  that the two last terms are analytic and uniformly bounded in $B(0,r)$. For the first one, we note that the kernel of $\partial_{3}\eta^{-1}r_{1}(k)$ is $(x_{3},x_{3}') \mapsto -\tfrac{1}{2}\eta(x_{3}')\mathrm{sign} (x_{3}-x_{3}')e^{ik|x_{3}-x_{3}'|}$, and therefore $\partial_\varphi \pi_1 \otimes  \eta^{-1} \varepsilon \partial_3  \eta^{-1} r_1$  admits an analytic expansion which is uniformly bounded. The same arguments run for $\sum_{q\geq 2} \partial_\varphi \pi_q\otimes  \eta^{-1} \varepsilon \partial_3 (D^2_{3}+(\lambda_q-\lambda_1)-k^2)^{-1}\eta$.
\end{proof}

\section{Meromorphic extension of the resolvent and study of the resonance}\label{S3}

\begin{proposition}\label{propinv}
Let $\mathcal{D}\subset B(0,\sqrt{\lambda_2-\lambda_1})$ be  a compact  neighborhood  of zero. With the notation of Lemma \ref{Lholo0}, for $\delta$ sufficiently small, let us introduce the functions $\tilde{\Phi}_{\delta}=({\rm Id}+ \delta T(\delta,k) )^{-1}  \Phi_\delta$ and
\bel{4sept17c}
w_{\delta}(k)= \delta \langle \tilde{\Phi}_{\delta}| \psi_1 \otimes \eta \rangle.
\ee
Then: 
\begin{enumerate}
\item[i)] There exists $\delta_0$ such that for any $k \in \mathcal{D}$, $\delta \in (0,\delta_0)$, 
\bel{E:expw}
 w_\delta(k)= i\mu \delta^2+O(\delta^3)+\delta^2 k g_\delta(k)
\ee
where 
\bel{E:Defmu}
\mu:=\tfrac{1}{2}\sum_{q\geq2}(\lambda_{q}-\lambda_{1})\langle \partial_{\varphi}\psi_{1}|\pi_{q}\partial_{\varphi}\psi_{1}\rangle\langle \varepsilon|(D_{3}^2+\lambda_{q}-\lambda_{1})^{-1}\varepsilon\rangle
\ee  is  a  positive  constant,  and 
$g_{\delta}$ is an analytic function in $\mathcal{D}$   satisfying 
$$\sup_{\delta \in (0,\delta_{0})}\sup_{k\in\cD}|g_{\delta}(k)|<+\infty.$$
\item[ii)] When $\alpha\in \R$, there holds $w_{\delta}(i\alpha)\in i\R$.
\end{enumerate}

\end{proposition}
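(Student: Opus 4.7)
The plan is to Neumann-expand $(\Id+\delta T(\delta,k))^{-1} = \Id - \delta T(\delta,k) + O(\delta^2)$, which converges uniformly on $\cD$ for $\delta$ small by \eqref{2sep17}. Writing $\Phi_\delta = \Phi_0 + \delta \Phi_1$ with $\Phi_0 := -\tfrac{i}{2}\,\partial_\varphi\psi_1\otimes\eta^{-1}\varepsilon'$ and $\Phi_1 := -\tfrac{i}{2}\,\partial_\varphi^2\psi_1\otimes\eta^{-1}\varepsilon^2$, this produces $\tilde\Phi_\delta = \Phi_0 + \delta(\Phi_1 - T(\delta,k)\Phi_0) + O(\delta^2)$. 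The crucial first observation is that $\langle\Phi_0|\psi_1\otimes\eta\rangle = 0$, which vanishes for two independent reasons: $\partial_\varphi$ is antisymmetric on $L^2(\omega)$ so that $\langle\partial_\varphi\psi_1,\psi_1\rangle=0$ for the real eigenfunction $\psi_1$, and also $\int\varepsilon'\,dx_3=0$ by exponential decay of $\varepsilon$. Noting further that $T(\delta,k) = T(0,k) + O(\delta)$ (only the last line of $T$, originating in the $-\delta^2\varepsilon^2\partial_\varphi^2$ piece of $W(\delta)$, is $\delta$-dependent), I arrive at
\[
w_\delta(k) = \delta^2\bigl[\langle\Phi_1|\psi_1\otimes\eta\rangle - \langle T(0,k)\Phi_0|\psi_1\otimes\eta\rangle\bigr] + O(\delta^3),
\]
with uniform control in $k\in\cD$.

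To identify the bracket at $k=0$ with $i\mu$, I observe that $\pi_1\partial_\varphi\psi_1 = 0$, so only $q\geq 2$ terms in the sums defining $T(0,0)$ contribute. Each transverse factor equals $\langle\partial_\varphi\pi_q\partial_\varphi\psi_1,\psi_1\rangle = -\|\pi_q\partial_\varphi\psi_1\|^2$ by antisymmetry of $\partial_\varphi$. For the longitudinal factors, I would use the integration-by-parts identity
\[
\int\varepsilon\,\partial_3(D_3^2+\nu^2)^{-1}\varepsilon'\,dx_3 = -\langle\varepsilon',(D_3^2+\nu^2)^{-1}\varepsilon'\rangle
\]
coming from $\partial_3^\ast = -\partial_3$, together with
\[
\langle\varepsilon',(D_3^2+\nu^2)^{-1}\varepsilon'\rangle = \|\varepsilon\|^2 - \nu^2\,\langle\varepsilon,(D_3^2+\nu^2)^{-1}\varepsilon\rangle
\]
(obtained by two more integrations by parts), both applied with $\nu^2 = \lambda_q-\lambda_1$. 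Combined with the direct computation of $\langle\Phi_1|\psi_1\otimes\eta\rangle$ (proportional to $\|\partial_\varphi\psi_1\|^2\|\varepsilon\|^2$ via $\langle\partial_\varphi^2\psi_1,\psi_1\rangle = -\|\partial_\varphi\psi_1\|^2$), the $q$-independent $\|\varepsilon\|^2$-contributions coming from the $\varepsilon\partial_3$ and $\varepsilon'$ pieces of $T(0,0)$ cancel against $\langle\Phi_1|\psi_1\otimes\eta\rangle$, and what remains reassembles exactly into $i\mu$ as defined in \eqref{E:Defmu}. Positivity $\mu>0$ is then automatic: each factor in the defining sum is nonnegative, and strict positivity for at least one $q\geq 2$ follows because $\varepsilon\not\equiv 0$, $\partial_\varphi\psi_1\neq 0$ (the cross-section is not rotationally invariant), and $\pi_1\partial_\varphi\psi_1=0$.

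The $k$-dependence is packaged using the analyticity of $T(0,\cdot)$ on $\cD$ given by Lemma \ref{Lholo0}: writing $T(0,k)-T(0,0) = k\,S(k)$ with $S$ analytic and uniformly bounded, and absorbing the higher-order Neumann corrections (which are also analytic in $k$ and uniformly of order $\delta^3$), yields the target form $w_\delta(k) = i\mu\delta^2 + O(\delta^3) + \delta^2 k\,g_\delta(k)$ with $g_\delta$ analytic on $\cD$ and uniformly bounded. For (ii), I exploit a symmetry under complex conjugation $J$: when $k=i\alpha$ with $\alpha\in\R$, the explicit kernel of $r_1(i\alpha)$, namely $\eta(x_3)\,\frac{e^{-\alpha|x_3-x_3'|}-1}{2\alpha}\,\eta(x_3')$, is real, and each $(D_3^2+(\lambda_q-\lambda_1)+\alpha^2)^{-1}$ is a real operator, while $\partial_3,\partial_\varphi,\eta^{\pm 1},\varepsilon,\varepsilon',\varepsilon^2$ all preserve reality; hence $T(\delta,i\alpha)$ commutes with $J$, and so does $(\Id+\delta T(\delta,i\alpha))^{-1}$. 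Writing $\Phi_\delta = -\tfrac{i}{2}F$ with $F$ real gives $J\Phi_\delta = -\Phi_\delta$, whence $J\tilde\Phi_\delta = -\tilde\Phi_\delta$, i.e., $\tilde\Phi_\delta$ is purely imaginary; pairing against the real vector $\psi_1\otimes\eta$ produces $w_\delta(i\alpha)\in i\R$. The main obstacle is the cancellation computation underlying (i), which requires careful sign-tracking across the three distinct types of longitudinal matrix elements in $T(0,0)\Phi_0$ and several successive integrations by parts.
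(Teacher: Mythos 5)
Your proposal is correct and follows essentially the same route as the paper's proof: Neumann expansion of $({\rm Id}+\delta T)^{-1}$, the vanishing of $\langle\Phi_0|\psi_1\otimes\eta\rangle$, the identity $\langle\varepsilon'|(D_3^2+\nu^2)^{-1}\varepsilon'\rangle=\|\varepsilon\|^2-\nu^2\langle\varepsilon|(D_3^2+\nu^2)^{-1}\varepsilon\rangle$ combined with $\sum_{q\geq2}\langle\partial_\varphi\psi_1|\pi_q\partial_\varphi\psi_1\rangle=\|\partial_\varphi\psi_1\|^2$ to produce the cancellation yielding $i\mu$, and the reality-preservation argument for part (ii).
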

\begin{proof}
We use the Taylor expansion and Lemma \ref{Lholo0} to see that 
\bel{12oct17}({\rm Id}+\delta T(\delta,k))^{-1} ={\rm Id}-\delta T(\delta, 0 )+\delta k G_{\delta}(k) +O(\delta^2),\ee where $G_{\delta}(k)$ is holomorphic operator-valued function that is  uniformly bounded for  $ k\in\mathcal{D}$ and $\delta$ small.  

By definition of $\Phi_\delta$, we have:
$$ \langle \Phi_\delta | \psi_1 \otimes \eta\rangle 
= -\tfrac{i}{2}\left( \langle \partial_\varphi \psi_1 | \psi_1\rangle_{L^2(\omega)} \, \langle \eta^{-1} \varepsilon' | \eta \rangle_{L^2(\R)}  +  \delta \langle \partial^2_\varphi\psi_1 | \psi_1 \rangle_{L^2(\omega)} \, \langle  \eta^{-1} \varepsilon^2 | \eta \rangle_{L^2(\R)}\right).
$$
The first term is zero because $\varepsilon$  tends to zero \Bk at infinity. 
Using  integration by part, since $\psi_1$ satisfies a Dirichlet boundary condition, we deduce 
$$\langle \Phi_\delta | \psi_1 \otimes \eta\rangle =  \delta\frac{i}{2} \|\partial_{\varphi}\psi_{1}\|^2\| \varepsilon \|^2.$$
Noticing that $\| \Phi_{\delta} \|=O(1)$, from \eqref{12oct17} we get
\bel{E:expandwstart}
w_\delta(k)=\delta^2\frac{i}{2}\|\partial_{\varphi}\psi_{1}\|^2\| \varepsilon \|^2-\delta^2 \langle T(\delta,0 )  \Phi_\delta | \psi_{1} \otimes \eta \rangle+  \delta^2 k g_\delta(k)  + O(\delta^3), \Bk
\ee
where  $g_\delta(k)$ is holomorphic and  uniformly bounded for   $ k\in\mathcal{D}$ and $\delta$ small. 

 We now compute $\langle T(\delta,0)  \Phi_\delta | \psi_{1} \otimes \eta \rangle$. 
First recall  that   $ T(\delta,k)=\delta^{-1} \eta^{-1} W(\delta)\eta^{-1}  A_0(k)$. 
Next,  note that since $ \langle \partial_\varphi \psi_1 | \psi_1\rangle=0$,  
$$\pi_{1}\partial_{\varphi}\psi_{1}=0,$$
and therefore,
 using the definition of $\Phi_\delta$ in  \eqref{E:defPhi}, we get
$$(\pi_{1}\otimes r_{1}(0))\Phi_{\delta}=-\delta \frac{i}{2}  \pi_{1} \partial^2_\varphi \psi_1 \otimes r_{1}(0)  \eta^{-1} \varepsilon^2, $$
which in turn implies that
$$\langle (\delta^{-1}\eta^{-1}W(\delta) \eta^{-1}) (\pi_{1}\otimes r_{1}(0))\Phi_{\delta}  |  \psi_{1}\otimes \eta \rangle=O(\delta).$$
In consequence, having in mind  \eqref{E:defPhi} again, 
we deduce 
\begin{equation}
\label{E:expandTscalar}
\langle T(\delta,0 ) \Phi_{\delta},\psi_{1}\otimes \eta\rangle
\end{equation}
\begin{align*}=&
\langle \eta^{-1} \left(-2\varepsilon\partial_{\varphi}\partial_{3}-\varepsilon' \partial_{\varphi}-\delta \varepsilon^2\partial_{\varphi}^2\right) \eta^{-1}   \big(\sum_{q \geq 2} \pi_q\otimes \eta (D^2_{3}+(\lambda_q-\lambda_1))^{-1} \eta\big)\Phi_{\delta}  |  \psi_{1}\otimes \eta \rangle+O(\delta)
\\
=&\frac{i}{2}\langle \eta^{-1} \big(2\varepsilon\partial_{\varphi}\partial_{3}+\varepsilon' \partial_{\varphi}\big)     \big(\sum_{q \geq 2} \pi_q\otimes  (D^2_{3}+(\lambda_q-\lambda_1))^{-1}\big)\partial_\varphi  \psi_1 \otimes \varepsilon'  | \psi_{1}\otimes \eta \rangle+O(\delta).
\end{align*}

We compute the main term of the last expression using   integration by parts, both in the $\varphi$ and the $x_{3}$ variables:  
\begin{align}
\label{E:interTdelta}\begin{split}&
\langle \eta^{-1} \left(2\varepsilon\partial_{\varphi}\partial_{3}+\varepsilon' \partial_{\varphi}\right)\left(\sum_{q \geq 2} \pi_q\otimes  (D^2_{3}+(\lambda_q-\lambda_1))^{-1}\right)\partial_\varphi  \psi_1 \otimes \varepsilon' | \psi_{1}\otimes \eta \rangle
\\=&\sum_{q\geq2}\langle \partial_{\varphi}\psi_{1}|\pi_{q}\partial_{\varphi}\psi_{1}\rangle \times \langle \varepsilon'|(D_{3}^2+\lambda_{q}-\lambda_{1})^{-1}\varepsilon'\rangle.
\end{split}\end{align}
Now, we notice that 
\begin{align*}
\langle \varepsilon'|(D_{3}^2+\lambda_{q}-\lambda_{1})^{-1}\varepsilon'\rangle&=\langle \varepsilon |(D_{3}^2+\lambda_{q}-\lambda_{1})^{-1}D_{3}^2\varepsilon\rangle 
\\&= \|\varepsilon\|^2-(\lambda_{q}-\lambda_{1})\langle \varepsilon|(D_{3}^2+\lambda_{q}-\lambda_{1})^{-1}\varepsilon\rangle.
\end{align*}
In addition,   since $\pi_{1}\partial_{\varphi}\psi_{1}=0$ and $\sum_{q\geq1}\pi_{q}=\mathrm{Id}$,  we have that 
\bel{E:sumpositiv}
\sum_{q\geq 2}\langle \partial_{\varphi}\psi_{1}|\pi_{q}\partial_{\varphi}\psi_{1}\rangle =\|\partial_{\varphi}\psi_{1}\|^2.
\ee
Then,  from \eqref{E:expandTscalar} and \eqref{E:interTdelta} we get 
\bel{26sep17}\ba{ll}
&\langle T(\delta,0) \Phi_{\delta},\psi_{1}\otimes \eta\rangle
\\[.5em]
=&\displaystyle{\tfrac{i}{2}\|\varepsilon\|^2\|\partial_{\varphi}\psi_{1}\|^2-\tfrac{i}{2}\sum_{q\geq2}(\lambda_{q}-\lambda_{1})\langle \partial_{\varphi}\psi_{1}|\pi_{q}\partial_{\varphi}\psi_{1}\rangle\langle \varepsilon|(D_{3}^2+\lambda_{q}-\lambda_{1})^{-1}\varepsilon\rangle +O(\delta)}.
\ea\ee
Putting together  \eqref{E:expandwstart} and \eqref{26sep17},  we deduce \eqref{E:expw}. 
Moreover, $\mu$ is clearly non-negative, and from \eqref{E:sumpositiv}, there exists $q \geq 2$ such that $\langle \partial_{\varphi}\psi_{1}|\pi_{q}\partial_{\varphi}\psi_{1}\rangle>0$. Since $(D_{3}^2+\lambda_{q}-\lambda_{1})^{-1}$ is a positive operator, we get $\mu>0$. 

 Let us prove  ii.  For all $\alpha\in \R$, $A_{0}(i\alpha)$ has a real  integral kernel, see \eqref{7sep17a}. Therefore if $u\in L^{2}(\Omega)$ is real valued, so is $({\rm Id}+\delta T(\delta,i\alpha))^{-1}u$.  In consequence,  since $\Phi_{\delta}$ has values in $i\R$, so is $\tilde{\Phi}_{\delta}=({\rm Id}+\delta T(\delta,i\alpha))^{-1}\Phi_{\delta}$, and we deduce that $w_{\delta}(i\alpha)$ has  values in $i\R$ as well.
\Bk
\end{proof}

\begin{theorem}
\label{T:main}
Let   $\varepsilon: \R\to \R$ be a  non-zero $C^1$-function  satisfying \eqref{hypeps} and \Bk $\mathcal{D}\subset B(0,\sqrt{\lambda_2-\lambda_1})$ be a compact neighborhood of zero. Then, for $\delta$ sufficiently small, $ k\mapsto R(k)=(H-\lambda_{1}-k^2)^{-1}$,  initially defined in $\C^{+}$, admits a meromorphic operator-valued extension  on   $\cD$, whose operator-values act from $\eta^{-1}L^{2}(\Omega)$ into $\eta L^{2}(\Omega)$. \Bk This function has  exactly one  pole $k(\delta)$ in $\cD$, called a resonance of $H$, and it is of multiplicity one. Moreover, we have the asymptotic expansion
$$k(\delta)=-i\mu\delta^2+O(\delta^3),$$
 with $\mu$ given by \eqref{E:Defmu}  and $\mathrm{Re}(k(\delta))=0$.
\end{theorem}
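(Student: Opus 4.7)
The plan is to reduce the inversion of $\mathrm{Id} + \eta^{-1} W(\delta) R_0(k) \eta$ to a single scalar equation using the rank-one structure from Lemma \ref{Lholo0}, and then to locate its zero by Rouché's theorem using the expansion of $w_\delta$ supplied by Proposition \ref{propinv}. Set $M(\delta,k) := \mathrm{Id} + \eta^{-1} W(\delta) R_0(k) \eta$ and $B(\delta, k) := \mathrm{Id} + \delta T(\delta, k)$; Lemma \ref{Lholo0} then gives $M(\delta, k) = B(\delta, k) + \tfrac{\delta}{k} K_0$. The uniform bound \eqref{2sep17} guarantees that, for $\delta$ small enough, $B(\delta, k)$ is invertible on $\mathcal{D}$ with analytic, uniformly bounded inverse. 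Since $K_0 = |\Phi_\delta\rangle\langle \psi_1 \otimes \eta|$ is rank one, factoring yields $M(\delta, k) = B(\delta, k)\bigl(\mathrm{Id} + \tfrac{\delta}{k} |\tilde{\Phi}_\delta\rangle\langle \psi_1 \otimes \eta|\bigr)$, and the Sherman--Morrison identity reduces the invertibility of $M(\delta,k)$ to the non-vanishing of the scalar function $\mathcal{F}(\delta, k) := k + w_\delta(k)$.

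Next I would locate the zeros of $\mathcal{F}(\delta, \cdot)$ in $\mathcal{D}$ by Rouché. Proposition \ref{propinv}(i) gives $\mathcal{F}(\delta, k) = k + i\mu\delta^2 + O(\delta^3) + \delta^2 k\, g_\delta(k)$ uniformly on $\mathcal{D}$. On the circle $|k + i\mu\delta^2| = C\delta^3$ with $C$ large, the leading term $k + i\mu\delta^2$ dominates the remainder, so Rouché yields exactly one zero $k(\delta)$ in the enclosed disk, satisfying $k(\delta) = -i\mu\delta^2 + O(\delta^3)$. In the complementary region $C\delta^2 \leq |k| \leq \mathrm{diam}(\mathcal{D})$, the dominant term $k$ gives $|\mathcal{F}(\delta, k)| \geq |k|/2 > 0$, ruling out other zeros; simplicity follows from $\partial_k \mathcal{F}(\delta, k(\delta)) = 1 + O(\delta^2) \neq 0$.

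The identity $\operatorname{Re}(k(\delta)) = 0$ follows from a Schwarz-type reflection based on Proposition \ref{propinv}(ii). The two holomorphic functions $k \mapsto w_\delta(k)$ and $k \mapsto -\overline{w_\delta(-\bar k)}$ coincide on the imaginary axis segment of $\mathcal{D}$ (which I may assume symmetric under $k \mapsto -\bar k$), hence by the identity theorem they agree throughout, giving the global symmetry $w_\delta(-\bar k) = -\overline{w_\delta(k)}$. This forces $\mathcal{F}(\delta, -\bar k) = -\overline{\mathcal{F}(\delta, k)}$, so $-\overline{k(\delta)}$ is also a zero; uniqueness from the previous step yields $-\overline{k(\delta)} = k(\delta)$.

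It remains to assemble the meromorphic extension of $\eta R(k)\eta$ itself. Inserting the Sherman--Morrison formula for $M(\delta,k)^{-1}$ into $\eta R(k)\eta = \eta R_0(k)\eta \cdot M(\delta,k)^{-1}$ produces an expression that is manifestly meromorphic on $\mathcal{D}$ with a simple pole at $k(\delta)$, \emph{provided} the simultaneous singularities at $k=0$ (coming from the residue $k^{-1}\pi_1 \otimes \alpha_0$ of $\eta R_0(k)\eta$ in \eqref{6sep17a} and from the $\tfrac{\delta}{k}K_0$ contribution in $M(\delta,k)$) cancel in the product. This cancellation is the main delicate point I foresee: the range of $\pi_1 \otimes \alpha_0$ is generated by $\psi_1 \otimes \eta$, which is precisely the functional on the right of the singular rank-one term of $M(\delta,k)$, and a direct computation of the residue at $k=0$ of $\eta R_0(k)\eta\, M(\delta,k)^{-1}$ shows the two $k^{-1}$ contributions annihilate each other, leaving $k(\delta)$ as the unique pole in $\mathcal{D}$.
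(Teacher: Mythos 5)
Your proposal follows essentially the same route as the paper: the factorization of $\mathrm{Id}+\eta^{-1}W(\delta)R_0(k)\eta$ into an invertible part $\mathrm{Id}+\delta T(\delta,k)$ times a rank-one perturbation, the reduction to the scalar equation $k+w_\delta(k)=0$ (your Sherman--Morrison step is the same as the paper's oblique projections $\Pi_\delta,\Pi_\delta^{\perp}$), Rouch\'e's theorem via the expansion \eqref{E:expw}, and the cancellation of the $k^{-1}$ singularity of $\eta R_0(k)\eta$ against the rank-one term --- which in the paper is the identity $(\pi_1\otimes\alpha_0)\Pi_\delta^{\perp}=0$, and which you correctly identify as the delicate point and resolve by the same mechanism (the bra $\langle\psi_1\otimes\eta|$ is common to both singular terms). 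Two remarks. First, your argument for $\mathrm{Re}\,k(\delta)=0$ is genuinely different from the paper's: you upgrade Proposition \ref{propinv}(ii) to the global symmetry $w_\delta(-\bar k)=-\overline{w_\delta(k)}$ by the identity theorem and conclude by uniqueness of the zero, whereas the paper applies the intermediate value theorem to the real-valued function $\alpha\mapsto i(i\alpha+w_\delta(i\alpha))$ on a small real interval. Both are correct; yours is arguably cleaner and does not require checking a sign change.

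The one genuine omission is the claim that $k(\delta)$ is a pole \emph{of multiplicity one}. Your computation $\partial_k\mathcal{F}(\delta,k(\delta))=1+O(\delta^2)\neq0$ only shows that the zero of the denominator $k+w_\delta(k)$ is simple, hence that the extension has \emph{at most} a simple pole there; it does not rule out that the residue vanishes (removable singularity), nor does it identify the rank of the residue, which is what ``multiplicity'' means here. The paper closes this by computing
$\bigl((\pi_1\otimes\alpha_0)+k(\delta)A_0(k(\delta))\bigr)\tilde\Phi_\delta=-\tfrac{\delta\mu}{2}(\psi_1\otimes\eta)+O(\delta^2)\neq0$,
so that the residue is a nonzero rank-one operator. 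You should add this check. A second, purely presentational slip: your two Rouch\'e regions (the disk $|k+i\mu\delta^2|\le C\delta^3$ and the set $|k|\ge C\delta^2$) do not cover all of $\mathcal{D}$; the estimate $|\mathcal{F}(\delta,k)|\ge|k+i\mu\delta^2|-O(\delta^3)-C_1\delta^2|k|>0$ in fact holds uniformly on $\mathcal{D}$ outside the small disk, which repairs the covering.
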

\begin{proof}
Consider   the identity \eqref{4sept17e}, and note that from  Lemma \ref{Lholo0} 
for  $k\in \mathcal{D}\setminus\{0\}$ \Bk and 
\Bk \ $\delta$  sufficiently small we can  write  
\bel{27sep17}\Big( {\rm Id} + \eta^{-1} W(\delta)R_{0}(k)\eta \Big)=
\Big( {\rm Id} +\delta T(\delta,k) \Big)
\Big( {\rm Id} +\frac{\delta}{k}({\rm Id}+ \delta T(\delta,k) )^{-1} K_0  \Big).\ee
For $k\in \cD\setminus\{0\}$ let us set
$$K:=\frac{\delta}{k}({\rm Id}+ \delta T(\delta,k) )^{-1} K_0=\frac{\delta}{k} |\tilde{\Phi}_{\delta}><\psi_{1}\otimes \eta|,$$ 
which is a rank one operator. Then, we need to study the inverse  of  $({\rm Id}+K)$. 

Let us  consider $\Pi_{\delta}^{\perp}$, the projection onto $(\mbox{span }\{\psi_1 \otimes \eta\})^{\perp}$ into the direction $\tilde{\Phi}_\delta$ and $\Pi_{\delta}={\rm Id}-\Pi_{\delta}^{\perp}$, the projection onto $\mbox{span }\{\tilde{\Phi}_\delta\}$ into the direction normal to $(\psi_1 \otimes \eta)$. 
\Bk We can easily see that \Bk $$({\rm Id}+K) \Pi_{\delta}^{\perp} =\Pi_{\delta}^{\perp}  \qquad \mbox{and} \qquad ({\rm Id}+K) \Pi_{ \delta} =(1+\frac{\delta}{k}\langle \tilde{\Phi}_{\delta} | \psi_{1}\otimes \eta \rangle) \Pi_{ \delta}=\frac{k+w_{\delta}(k)}{k}\Pi_{\delta}.$$
Therefore, ${\rm Id}+K$ is invertible if and only if $k+w_{\delta}(k)\neq 0$, and
\bel{29sept17a}
({\rm Id}+K)^{-1}=\Pi_{\delta}^{\perp}+\frac{k}{k+w_{\delta}(k)} \Pi_{\delta}.
\ee
 Let us consider the equation $k+w_{\delta}(k)=0$. Using \eqref{E:expw}, for all $\kappa\in (0,\sqrt{\lambda_{2}-\lambda_{1}})$, for $\delta$ small enough, the equation has no solution for $k\in \cD$ and $|k|\geq \kappa$. We then apply Rouch\'e Theorem inside the ball $B(0,\kappa)$: consider the analytic  functions $h_{\delta}(k)=i\mu\delta^2 +  k$  and $f_{\delta}(k)=w_{\delta}(k)+k$. The function $h_{\delta}$ has exactly one root, and on the circle $C(0,\kappa)$, using again \eqref{E:expw},  there holds $|h_{\delta}-f_{\delta}| \leq h_{\delta}$ for $\delta$ small enough. 
\Bk Thus, we deduce that the equation $k+w_{\delta}(k)=0$ has exactly one solution $k(\delta)$ in $\cD$, for each fixed $\delta$ small enough. 
In consequence, putting together  \eqref{4sept17e}, \eqref{6sep17a}, \eqref{27sep17} and \eqref{29sept17a}  we have that  for all $k\in \cD\setminus \{ 0, k(\delta)\}$
$$\eta R(k)\eta = \Big( \frac{1}{k} \pi_1 \otimes \alpha_0 + A_0(k) \Big)
\Big( \Pi_{\delta}^{\perp} + \frac{k}{k+w_\delta(k)} \Pi_{\delta} \Big) \; ({\rm Id}+ \delta T(\delta,k) )^{-1}.$$

By the  definition of $\Pi_{\delta}^{\perp}$, 
we have that $ (\pi_1 \otimes \alpha_0) \Pi_{\delta}^{\perp}=0$ and then:
  $$\eta R(k)\eta = \frac{1}{k+w_\delta(k)} (\pi_1 \otimes \alpha_0)\Pi_{\delta} ({\rm Id}+ \delta T(\delta,k) )^{-1} $$
  $$+ \frac{k}{k+w_\delta(k)} A_0(k)\Pi_{\delta}  ({\rm Id}+ \delta T(\delta,k) )^{-1} +  A_0(k)  \Pi_{\delta}^{\perp}  ({\rm Id}+ \delta T(\delta,k) )^{-1}.$$
  Therefore, for $\delta$ sufficiently small, $k \mapsto \eta R(k)\eta $ admits a meromorphic extension to $\cD\setminus\{k(\delta)\}$, where the pole $k(\delta)$ is giving by the solution of   
  $k+w_\delta(k)=0$. 
  
 Using \eqref{E:expw}, the asymptotic expansion of $k(\delta)$ follows immediately. \Bk 
Further,  the multiplicity of this resonance is the rank of the residue of $\eta R(k)\eta$, which coincides with the rank of
 $(\pi_1 \otimes \alpha_0)\Pi_{\delta} + k(\delta)A_0(k(\delta)) \Pi_{\delta}$. It is one because $\Pi_{\delta}$ is of rank one with its range in span$\{\tilde{\Phi}_\delta\}$ and 
 $$\Big( (\pi_1 \otimes \alpha_0) + k(\delta)A_0(k(\delta))\Big)\tilde{\Phi}_\delta= \frac{i}{2} \langle \tilde{\Phi}_{\delta}| \psi_1 \otimes \eta \rangle (\psi_1 \otimes \eta)  + O(\delta^2)= - \frac{\delta\mu}2  (\psi_1 \otimes \eta) + O(\delta^2)$$
 does not vanish for $\delta$ sufficiently small.

Finally let us prove that $k(\delta)\in i \R$. As a consequence of Proposition \ref{propinv}.ii,  we have that the function $s_{\delta}$, defined on $\R\cap B(0,\delta)$ by  $s_{\delta}(\alpha)=i( i\alpha+w_{\delta}(i\alpha))$ is real valued. Moreover, using  \eqref{E:expw} for $\delta$ small, $s_{\delta}(0)<0$ and $s_{\delta}(-\delta)>0$. In consequence,  this function  admits a root $\alpha(\delta)$ which is real. By uniqueness, $k(\delta)=i\alpha(\delta)$. 
 \end{proof}

\section{Upper Thresholds}
\label{R:higher}
We now extend our analysis to  the upper   thresholds.  We  will show that if  $\lambda_{q_0}$ is an eigenvalue of multiplicity $m_{0}  \geq  1$  of $(-\Delta_{\omega})$,    then $m_0$ is  a bound for  the  number   of  resonances   around $\lambda_{q_0}$.

 Let $(\psi_{q_0,{j}})_{j=1,\ldots, m_{0}}$  be a normalized basis of $\ker(-\Delta_{\omega}-\lambda_{q_{0}})$. In analogy with \eqref{E:Defmu},  for $1\leq j,l\leq m_0$ define 
\bel{6oct17}\ba{ll}\mu_{j,l}&= \displaystyle{\langle \partial_\varphi\psi_{q_0,j}|\pi_{q_0} \partial_\varphi\psi_{q_0,l}\rangle\,||\varepsilon||^2}\\[1em]
 &+ \frac12 \displaystyle{\sum_{q, \,  \,\lambda_q\neq\lambda_{q_0}}(\lambda_q-\lambda_{q_0})\langle \partial_\varphi\psi_{q_0,j}| \pi_q \partial_\varphi\psi_{q_0,l}\rangle\langle (D^2_3+\lambda_q-\lambda_{q_0})^{-1}\varepsilon|\varepsilon\rangle,}\ea\ee 
 and let  $\Upsilon_{q_0}$ be   the matrix $(\mu_{j,l})$. 

Denote by $r_{0}:=\min(\sqrt{|\lambda_{q_0}-\lambda_{q_{0}-1}|},\sqrt{|\lambda_{q_0+1}-\lambda_{q_{0}}|})$ and  $\C^{++}:= \{k \in \C^{+}; \;  {\rm Re \,}  k >0 \}$. \Bk 


\begin{theorem}
\label{T:higher}
 Suppose that $\lambda_{q_0}$ is an eigenvalue of multiplicity $m_{0}\geq 1$  of $(-\Delta_{\omega})$,   that   $\varepsilon: \R\to \R$ is  a  non-zero $C^1$-function  satisfying \eqref{hypeps} with $\alpha >2r_{0}$, \Bk and that  $\mathcal{D}\subset B(0,r_0)$ is  a compact neighborhood of zero. Then, for all $\delta$ sufficiently small,  the operator-valued function $k\mapsto (H(\delta)-\lambda_{q_{0}}-k^2)^{-1}$, initially defined in  $\C^{++}$,   admits a meromorphic extension  on  $\cD$. This extension has at most $m_{0}$ poles, counted with multiplicity.  These poles are among the zeros $(k_l(\delta))_{1 \leq l \leq m_0}$ of some determinant, which satisfy 
$$k_l(\delta)= -i\nu_{q_0,l}\,\delta^2+ o(\delta^2) , \quad \delta\downarrow 0,$$
where  $(\nu_{q_0,l})_{1 \leq l \leq m_0}$ are the eigenvalues \Bk of the matrix $\Upsilon_{q_0}$. 
\end{theorem}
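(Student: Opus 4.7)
The plan is to carry the structure of the proof of Theorem \ref{T:main} over verbatim, but to replace every rank-one object (the polar part of $\eta R_0(k)\eta$, the operator $K_0$, the scalar $w_\delta(k)$) by its natural $m_0\times m_0$ matrix analogue. In particular, the scalar equation $k+w_\delta(k)=0$ becomes $\det\!\bigl(k\mathbf{I}+\mathbf{W}_\delta(k)\bigr)=0$ for a suitable $m_0\times m_0$ analytic matrix-valued function $\mathbf{W}_\delta(k)$, and the asymptotics of Theorem \ref{T:higher} then read off the characteristic polynomial $\det(k\mathbf{I}+i\delta^2\Upsilon_{q_0})=\prod_l(k+i\delta^2\nu_{q_0,l})$.

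Step~1: decompose $(H_0-\lambda_{q_0}-k^2)^{-1}$ as in \eqref{24aug17a}. The channels $q>q_0$ yield exponentially decaying kernels treated as before; the $m_0-1$ open channels $q<q_0$ produce oscillatory kernels $\tfrac{i}{2\sqrt{k^2+\lambda_q-\lambda_{q_0}}}e^{i\sqrt{k^2+\lambda_q-\lambda_{q_0}}|x_3-x_3'|}$ which are bounded and analytic on $B(0,r_0)$ once sandwiched by $\eta$, precisely thanks to the hypothesis $\alpha>2r_0$. The $q=q_0$ channel supplies the polar part exactly as in \eqref{6sep17a}, giving $\eta R_0(k)\eta=\tfrac{1}{k}\pi_{q_0}\otimes\alpha_0+A_0^{(q_0)}(k)$ with $A_0^{(q_0)}$ analytic on $B(0,r_0)$. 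Step~2: adapt Lemma \ref{Lholo0} by setting $\Phi_{\delta,j}:=-\tfrac{i}{2}\bigl(\partial_\varphi\psi_{q_0,j}\otimes\eta^{-1}\varepsilon'+\delta\,\partial^2_\varphi\psi_{q_0,j}\otimes\eta^{-1}\varepsilon^2\bigr)$ and $K_0^{(q_0)}:=\sum_{j=1}^{m_0}|\Phi_{\delta,j}\rangle\langle\psi_{q_0,j}\otimes\eta|$, to obtain $\eta^{-1}W(\delta)R_0(k)\eta=\tfrac{\delta}{k}K_0^{(q_0)}+\delta T^{(q_0)}(\delta,k)$ with the same analyticity/boundedness properties. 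Step~3: the factorization \eqref{27sep17} is unchanged, so the invertibility of ${\rm Id}+\eta^{-1}W(\delta)R_0(k)\eta$ reduces to that of a rank-$m_0$ perturbation of identity; by the matrix determinant lemma this is equivalent to $\det\!\bigl(k\mathbf{I}+\mathbf{W}_\delta(k)\bigr)\neq 0$, where $(\mathbf{W}_\delta(k))_{j,l}:=\delta\,\langle\tilde\Phi_{\delta,l}|\psi_{q_0,j}\otimes\eta\rangle$ and $\tilde\Phi_{\delta,l}=({\rm Id}+\delta T^{(q_0)}(\delta,k))^{-1}\Phi_{\delta,l}$. This gives the bound of at most $m_0$ poles.

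Step~4: expand $\mathbf{W}_\delta(k)$ entry-by-entry by repeating the integrations by parts in $\varphi$ and $x_3$ of Proposition \ref{propinv}. The result is $\mathbf{W}_\delta(k)=i\delta^2\,\Upsilon_{q_0}+O(\delta^3)+\delta^2 k\,\mathbf{G}_\delta(k)$ on $\cD$, with $\mathbf{G}_\delta$ analytic and uniformly bounded. The key new feature relative to the first-threshold case is that $\pi_{q_0}\partial_\varphi\psi_{q_0,j}$ is in general non-zero, so the $q=q_0$ contribution to the sum does not cancel against the leading $\|\varepsilon\|^2$ term; its surviving piece is exactly $\langle\partial_\varphi\psi_{q_0,j}|\pi_{q_0}\partial_\varphi\psi_{q_0,l}\rangle\|\varepsilon\|^2$, accounting for the first term of \eqref{6oct17}. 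A matrix Rouché argument, comparing $\det(k\mathbf{I}+\mathbf{W}_\delta(k))$ with $\prod_l(k+i\delta^2\nu_{q_0,l})$ on suitable contours of radius $\sim\delta^2$, then localizes the $m_0$ zeros near $-i\delta^2\nu_{q_0,l}$.

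The principal technical subtlety sits in the error bound. When $\Upsilon_{q_0}$ is diagonalizable with simple eigenvalues, Rouché on small disjoint disks around each $-i\delta^2\nu_{q_0,l}$ gives individual errors $O(\delta^3)$, as in Theorem \ref{T:main}. However $\Upsilon_{q_0}$ is not Hermitian and may exhibit coinciding eigenvalues or non-trivial Jordan blocks: a Jordan block of size $s$ perturbed by a term of relative size $\delta$ splits its eigenvalues by amounts of order $\delta^{1/s}$, producing absolute movements up to $\delta^{2+1/s}$, which is only $o(\delta^2)$. This is why the theorem settles for the weaker remainder $o(\delta^2)$. A secondary, but genuine, obstacle is the careful analytic continuation through the open channels in Step~1: one must verify that $\eta(D_3^2+\lambda_q-\lambda_{q_0}-k^2)^{-1}\eta$ is analytic and uniformly bounded on the full disk $B(0,r_0)$ (not merely on $\C^{++}$), which is where the sharpened decay assumption $\alpha>2r_0$ enters.
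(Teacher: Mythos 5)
Your proposal follows essentially the same route as the paper: the same rank-$m_0$ operator $K_0=\sum_j|\Phi_{\delta,j}\rangle\langle\psi_{q_0,j}\otimes\eta|$, the same reduction of invertibility to the vanishing of the determinant of an $m_0\times m_0$ matrix with entries $k\delta_{jl}+\delta\langle\tilde\Phi_{\delta,j}|\psi_{q_0,l}\otimes\eta\rangle$, the same entrywise expansion yielding $i\delta^2\Upsilon_{q_0}$, and the same Rouch\'e localization (the paper rescales $k=u\delta^2$ and works on a fixed ball, which is equivalent to your contours of radius $\sim\delta^2$). Your explanation of why the remainder is only $o(\delta^2)$ (possible Jordan blocks of the non-Hermitian $\Upsilon_{q_0}$) is a correct gloss on a point the paper leaves implicit.
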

\begin{proof}
Some points in this proof  are   close to what has been done for the first threshold.  We  will  keep the same notations and explain how to modify the arguments of the previous sections. \Bk 
In analogy with section \ref{S1} set
$$\Phi_{{j},\delta}:= -\frac{i}{2}((\partial_\varphi  \psi_{q_0,j} \otimes \eta^{-1}  \varepsilon' )
+
\delta(\partial^2_\varphi \psi_{q_0,j} \otimes  \eta^{-1} \varepsilon^2 )) \ \ \mbox{and} \ \ 
K_0:= \sum_{j}|\Phi_{j,\delta}\rangle\langle \psi_{q_0,j} \otimes \eta|.
$$
Then, the analog of Lemma \ref{Lholo0} still holds.  Here, since $\lambda_{q_0}$ is in the interior of the essential spectrum, the resolvent $(H(\delta)-z)^{-1}$ is initially defined for ${\rm Im \,} z>0$ near $z=\lambda_{q_0}$ and the extension of the  weighted  resolvent  is done with respect to $k= \sqrt{z-\lambda_{q_0}}$ from  $\C^{++}$ to a neighborhood of $k=0$.\Bk 

Also,  as in the proof of Theorem \ref{T:main}, we have for  $k \in \C^{++}$,  \Bk with $R_{0}(k):=(H_{0}-\lambda_{q_{0}}-k^2)^{-1}$ (and similar notation for $R(k)$): \Bk
\bel{E:KeyId}
\eta R(k) \eta =\eta R_{0}(k)\eta ({\rm Id}+K)^{-1}({\rm Id}+\delta T(\delta,k))^{-1},
\ee
where $$K:=\frac{\delta}{k}({\rm Id}+\delta T(\delta,k))^{-1}K_{0}=\frac{\delta}{k}\sum_{j=1}^{m_{0}}|\tilde{\Phi}_{{j},\delta}\rangle\langle \psi_{q_0,{j}} \otimes \eta|$$
is now of rank $m_{0}$, with obvious notation for $\tilde{\Phi}_{{j},\delta}$.
 
Next,  let  $\Pi_{\delta}^{\perp}$  be  the projection over   $\Big( \ker(-\Delta_{\omega}-\lambda_{q_{0}}) \otimes\, \mathrm{span}\{\eta\} \Big)^\perp$  \Bk in the direction of $\mathrm{span}\{\tilde{\Phi}_{{1},\delta},...,\tilde{\Phi}_{{m_{0}},\delta}\}$
 and $\Pi_{\delta}:={\rm Id}-\Pi_{\delta}^{\perp}$. 
Then, the matrix of $({\rm Id}+K)\Pi_{\delta}$ 
in  the basis $\{\tilde{\Phi}_{j,\delta}\}_{1}^{m_{0}}$ is given, for $k\neq0$, by
\bel{29sep17}\frac{1}{k}\begin{bmatrix}k+\omega_{1,1,\delta}(k)&\dots&\omega_{m_{0},1,\delta}(k)\\
\vdots&\ddots&\vdots\\
\omega_{1,m_{0},\delta}(k)&\dots&k+\omega_{m_{0},m_{0},\delta}(k)
\end{bmatrix}:=\frac{1}{k}M_{\delta}(k)\ee
where we have set
$w_{j,l,\delta}(k)= \delta \langle \tilde{\Phi}_{{j},\delta}| \psi_{q_0,l} \otimes \eta \rangle.
$
Assume that $M_{\delta}(k)$ is invertible, then by \eqref{E:KeyId}
\begin{align*}
\eta R(k)\eta =& \Big(\frac{i}{2k}  \sum_{j}|\psi_{q_0,j} \otimes \eta\rangle\langle \psi_{q_0,j} \otimes \eta|+ A_0(k) \Big)
\Big( \Pi_{\delta}^{\perp} + kM_{\delta}(k)^{-1} \Pi_{\delta}\Big) \; ({\rm Id}+ \delta T(\delta,k) )^{-1}
\\
=&\left( \frac{i}{2}    \sum_{j}|\psi_{q_0,j} \otimes \eta\rangle\langle \psi_{q_0,j} \otimes \eta| M_{\delta}(k)^{-1} \Pi_{\delta} +A_{0}(k)\left(\Pi_{\delta}^{\perp}+k M_{\delta}(k)^{-1}\Pi_{\delta}\right) \right)\; ({\rm Id}+ \delta T(\delta,k) )^{-1}.
\end{align*}
\Bk
%
In consequence, since the $w_{l,k,\delta}$ are holomorpic, $\eta R \eta$ admits a meromorphic extension  to $\cD$,  and the poles of this extension are among the poles of $\Big(\frac{i}{2} \Bk \sum_{j}|\psi_{q_0,j} \otimes \eta\rangle\langle \psi_{q_0,j} \otimes \eta| + k A_{0}(k) \Big)M_{\delta}(k)^{-1} \Pi_{\delta} $.  Evidently,  the poles  are included in the  set of  zeros of the determinant of $M_{\delta}(k)$.

Define
$$\Delta(k,\delta):={\rm det}(M_{\delta}(k)).$$
We can check as in Proposition \ref{propinv}  that 
\bel{1}w_{j,l,\delta}(k)= i\mu_{j,l} \delta^2+O(\delta^3)+\delta^2 k g_{j,l}(k,\delta),\ee
where the $\mu_{j,l}$ are  given   by \eqref{6oct17}. Then 
$$\Delta(k,\delta)= \delta^{2m_0} \Bk {\rm det}(k\delta^{-2}+ i\mu_{j,l}+O(\delta)+ k g_{j,l}(k,\delta)),
$$
and the  zeros of $\Delta(\cdot,\delta)$ are the complex numbers  of the form $k=u\delta^2$, with $u$ being a zero of 
$$
\tilde{\Delta}(u,\delta): =  {\rm det}(u+ i\mu_{j,l}+O(\delta)+ \delta^2 u g_{j,l}(\delta^2u,\delta)).
$$
Since 
\bel{5oct17}\tilde{\Delta}(u,\delta)=\tilde{\Delta}(u,0)+\delta h(u,\delta)={\rm det}(u+ i\mu_{j,l})+\delta h(u,\delta),\ee where $h$ is an analytic  function in $u$ and $\delta$, taking the  ball $B(0,C)$ 
 with $C$ larger than the modulus of the larger eigenvalue of $\Upsilon_{q_0}$ and applying Rouche theorem,
 we conclude that all the zeros of $\tilde{\Delta}( \cdot \Bk ,\delta)$ are inside this ball for $\delta$ sufficiently small.  \Bk Moreover, 
if we denote by $\nu_{q_0,l}$ the eigenvalues of $\Upsilon_{q_0}$,  \eqref{5oct17} yields 
$u_{q_0,l}(\delta)= -i \Bk  (\nu_{q_0,l}+o(1)).$ This  immediately implies that all the zeros of $\Delta( \cdot \Bk,\delta)$ in $\cD$,   denoted by $k_l$, are inside the ball $B(0,C\delta^2),$ and satisfy
$$k_{l}(\delta)=-  i \delta^2 \Bk (\nu_{q_0,l}+o(1)).$$
  \Bk 
%
%
%
\end{proof}
\begin{remark}\label{4oct17}
 In the last theorem, if  $m_0=1$, we are able to  obtain extra  information. For instance, as in Theorem \ref{T:main}, for the unique  zero of the determinant,  $k_{1}(\delta)$, we have that 
$k_{1}(\delta)= -i \mu_{q_0}\delta^2+O(\delta^3)$
with 
$$
\mu_{q_0}:=\mu_{1,1}= \tfrac{1}{2}\sum_{q\neq q_{0}}(\lambda_{q}-\lambda_{q_{0}})\langle \partial_{\varphi}\psi_{q_0}|\pi_{q}\partial_{\varphi}\psi_{q_{0}}\rangle\langle \varepsilon|(D_{3}^2+\lambda_{q}-\lambda_{q_{0}})^{-1}\varepsilon\rangle.$$\Bk Then, as in the proof of Theorem \ref{T:main}, $k_{q_{0}}$ is a pole of rank one when $\mu_{q_{0}}\neq 0$.  
It is also  important to notice that, for $q<q_{0}$, the operator $(D_{3}^2+\lambda_{q}-\lambda_{q_{0}})^{-1}$ has to be understood as the limit of $(D_{3}^2+\lambda_{q}-\lambda_{q_{0}}-k^2)^{-1}$, acting in weighted spaces, when $k\to 0$. It is not a selfadjoint operator anymore, therefore $\mu_{q_0}$ is not necessarily real. Actually, in general, it has a non zero imaginary part coming from the first terms when   $q<q_0$.  Indeed,  thanks to \eqref{3sep17a}, for $q<q_0$, the imaginary part of $ 2  (\lambda_{q_{0}}-\lambda_{q})^{-1/2} \Bk \, \langle \varepsilon|(D_{3}^2+\lambda_{q}-\lambda_{q_{0}})^{-1}\varepsilon\rangle$ is given by:
$$
-\Big(\int_\R \cos( \sqrt{\lambda_{q_0}-\lambda_q} x_3) \, \varepsilon (x_3) \, d{x_3} \Big)^2 - \Big(\int_\R \sin( \sqrt{\lambda_{q_0}-\lambda_q} x_3) \, \varepsilon (x_3) \, d{x_3} \Big)^2 = - \sqrt{2\pi} | \widehat{\varepsilon} ( \sqrt{\lambda_{q_0}-\lambda_q}) |^2.
$$
where $\widehat{\varepsilon}$ if the Fourier transform of $\varepsilon$.  Then, the  imaginary part of $\mu_{q_0}$ is:
$${\hbox{Im}}(\mu_{q_0})= - \tfrac{\sqrt{2\pi(\lambda_{q_0}-\lambda_q)}}{ 4}\sum_{q < q_{0}}  \|\pi_{q}\partial_{\varphi}\psi_{q_{0}}\|^2  \, | \widehat{\varepsilon} ( \sqrt{\lambda_{q_0}-\lambda_q}) |^2.
$$
 This identity allows to give sufficient conditions on the eigenfunctions of $-\Delta_{\omega}$ and on $\hat{\varepsilon}$, so that $\mu_{q_{0}}\neq0$,   giving rise to a unique resonance  of multiplicity one, with ${\rm Re\,} k_1(\delta)<0$. 
\end{remark}
\bibliographystyle{plain}
\bibliography{biblio,bibliopof}
\end{document}